\documentclass[conference]{IEEEtran}
\IEEEoverridecommandlockouts
\usepackage{cite}
\usepackage{amsmath,amssymb,amsfonts,amsthm}
\usepackage{algorithm}
\usepackage{algorithmic}
\usepackage{graphicx}
\usepackage{textcomp}
\usepackage{xcolor}
\usepackage{bm}
\DeclareMathOperator{\diag}{diag}
\DeclareMathOperator{\Tr}{Tr}
\DeclareMathOperator{\Arg}{Arg}
\DeclareMathOperator{\Expectation}{\mathop{\mathbb{E}}}
\newcommand\myciteup[1]{{\setcitestyle{square,super}\cite{b1}}}
\def\BibTeX{{\rm B\kern-.05em{\sc i\kern-.025em b}\kern-.08em
    T\kern-.1667em\lower.7ex\hbox{E}\kern-.125emX}}
\newtheorem{Beamforming Design}{Proposition}
\newtheorem{Proposition}{Proposition}

\begin{document}

\title{Secure Transmission Strategy for Intelligent Reflecting Surface Enhanced Wireless System}
\author{\IEEEauthorblockN{Biqian Feng, Yongpeng Wu and Mengfan Zheng}
\thanks{B. Feng, and Y. Wu are with the Department of Electronic Engineering, Shanghai Jiao Tong University, Minhang 200240, China (e-mail: fengbiqian@sjtu.edu.cn; yongpeng.wu@sjtu.edu.cn;).
	(Corresponding author: Yongpeng Wu.).

M. Zheng is with the Department of Electrical and Electronic Engineering, Imperial College London, London SW7 2AZ, U.K. (email: m.zheng@imperial.ac.uk).
}
}
\maketitle

\begin{abstract}
In this paper, we investigate the design of secure transmission frameworks with an intelligent reflecting surface (IRS). Our design aims to minimize the system energy consumption in cases of rank-one and full-rank access point (AP)-IRS links. To facilitate the design, the problem is divided into two parts: design of beamforming vector at AP and phase shift at IRS. In the rank-one channel model, the beamforming vector design and phase shift design are independent. A closed-form expression of beamforming vector is derived. Meanwhile, some algorithms, including the semidefinite relaxation algorithm and projected gradient algorithm, are taken to solve the phase shift problem in the case of instantaneous channel, and in the statistical channel model, the impact of phase shift on the overall system is analyzed. However, since beamforming and phase shift depend on each other in the full-rank model, we refer to conventional wiretap model and utilize an eigenvalue-based algorithm to obtain beamforming vector, while the aforementioned two phase optimization schemes are also applied.  Simulation results show that the IRS-enhanced system is envisioned to improve physical layer security.

\end{abstract}

\begin{IEEEkeywords}
intelligent reflecting surface, physical layer security, beamforming vector, phase shift
\end{IEEEkeywords}

\section{Introduction}
As a new technology to deal with weak link problem in scattering environment, intelligent reflecting surface (IRS) has received growing attention in recent years. Reference \cite{Q.Wu1} first proposed that IRS was envisioned to improve spectrum and energy efficiency in wireless communication. Then, reference \cite{Q.Wu2} introduced discrete phase shift optimization in IRS for practical systems. Furthermore, in contrast to previous works based on instantaneous channel state information (CSI), reference \cite{LIS_Statiscal} evaluated the performance of IRS under statistical CSI assumption. Additionally, reference \cite{main_reference} studied MIMO transmission schemes in the presence of a Line-of-Sight (LoS) link between access point (AP) and IRS.

Physical layer secure transmission techniques have been investigated for a long time. If the eavesdropper happens to have a better channel than the receiver, artificial-noise aided approaches can be used to confuse the eavesdropper \cite{Propose_AN}. Reference \cite{AN_power} further studied optimization of power consumption with semidefinite programming (SDP)-based algorithm. Cooperative jamming is another solution to enhance secrecy performance at the legitimate users \cite{closed-form}. As far as we have investigated, existing works on secure transmission techniques for IRS mainly focused on the instantaneous CSI problem with majorization-minimization Algorithm \cite{Secure_IRS1}\cite{Secure_IRS2}. In order to supplement the transmission properties of statistical channels, this paper focuses on the rank-one channel transmission performance.

The main contributions of this paper are as follows. (1) In the case of rank-one channel model in AP-IRS link, we derive the optimal solution of beamforming vector and phase shift designs and emphatically analyse the impact of phase shift on the overall system. (2) In other cases, we derive the optimal beamforming vector and propose two algorithms to handle the phase problem. The transmission process is decomposed into two stages, including beamforming vector design and phase shift design. More specifically, we investigate rank-one and full-rank channels between AP and IRS. On one hand, due to the good property of the rank-one channel, we separate beamforming vector and phase shift to achieve low complexity. Whether it is an instantaneous channel or a statistical channel in IRS-User link or IRS-Eve link, the beamforming matrix has the same form, which only relates to the AP-IRS link. Furthermore, we analyse the impact of phase shift on the whole system. On the other hand, in full-rank channel, we refer to conventional wiretap model to adopt eigenvalue-based algorithm and SDP/(projected gradient descend) PGD algorithm to optimize it. Simulation results show that the proposed algorithms yield great performance. Due to the blocked AP-User/Eve link, eavesdropper close to AP cannot wiretap efficiently while user close to IRS can achieve great quality of service. Our results show that IRS-enhanced system can improve physical layer security well.

The rest of this paper is organized as follows. In Section \uppercase\expandafter{\romannumeral2}, the system model and problem formulation are introduced. In Section \uppercase\expandafter{\romannumeral3}, we analyse the advantage of rank-one channel and show a reasonable scheme to design beamforming vector and phase shift. Section \uppercase\expandafter{\romannumeral4} provides another scheme for full-rank channels. Simulation results and discussions are given in Section \uppercase\expandafter{\romannumeral5}. Section \uppercase\expandafter{\romannumeral6} concludes the paper.

The notation of this paper are as follows. Boldface lowercase and uppercase letters, such as $\boldsymbol{a}$ and $\boldsymbol{A}$, are used to represent vectors and matrices, respectively. $\boldsymbol{I}_n$ denotes the n-by-n identity matrix. Superscripts $T$ and $H$ stand for the transpose and the conjugate transpose, respectively. $\vert *\vert$ denotes the Euclidean norm of a complex vector. $\lambda_{max}\left(\boldsymbol{A}\right)$ and $\gamma_{max}\left(\boldsymbol{A}\right)$ respectively denote the maximum eigenvalue of matrix $\boldsymbol{A}$ and its corresponding eigenvector. $\Arg\left(\boldsymbol{v}\right)$ denotes the phases of complex elements in the vector $\boldsymbol{v}$. $\mathcal{CN}(\boldsymbol{\mu},\boldsymbol{\Sigma})$ denotes a complex circular Gaussian distribution with mean $\boldsymbol{\mu}$ and covariance $\boldsymbol{\Sigma}$. 

\section{System Model And Problem Formulation}
As shown in Fig. \ref{Gaussian MISO wiretap channel with IRS}, we consider a Gaussian multiple input single output (MISO) wiretap channel model based on intelligent reflecting surface enhanced (IRS-enhanced) link. In this model, AP is equipped with $M$ antennas, while the legitimate user and the eavesdropper both have only a single antenna. Compared with the traditional wiretap channel model, the IRS-enhanced system introduces an IRS device, which is an intelligent control system that can dynamically adjust the phase through passive beamforming according to changes in the environment to upgrade communication quality.
\begin{figure}[htbp]
	\includegraphics[scale=0.4]{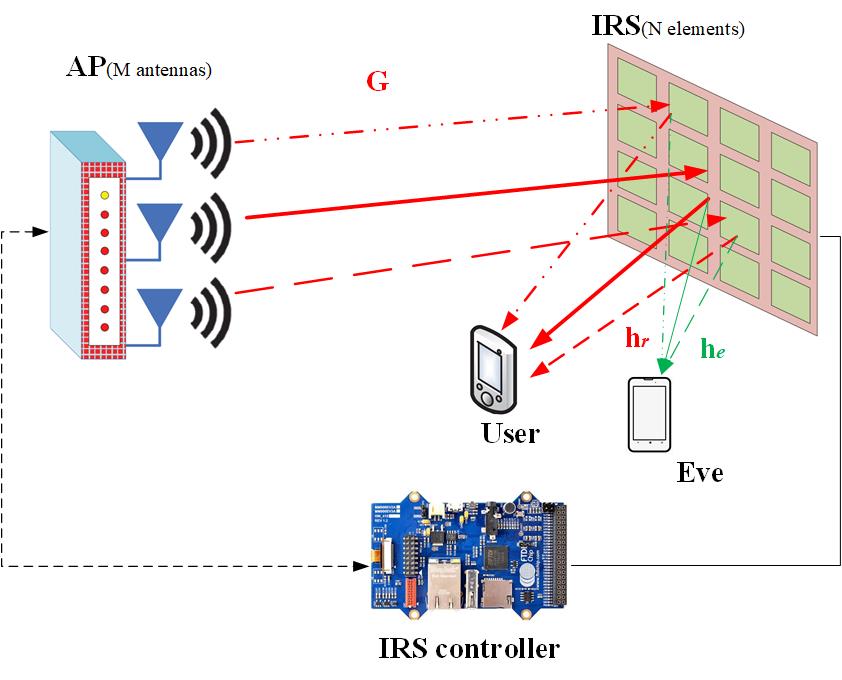}
	\caption{Gaussian MISO wiretap channel with IRS}
	\label{Gaussian MISO wiretap channel with IRS}
\end{figure}
\subsection{System Model}
Conventional MISO wiretap channel model only has an AP-User link and an AP-Eve link \cite{MIMO_I}. Once the AP-User link is blocked by an obstacle, the quality of communications will rapidly decline. As micro electromechanical systems develop rapidly, IRS is introduced to alleviate the impact of weak AP-User channel. Here, we assume quasi-static flat-fading channels, and the received complex baseband signals at the legitimate receiver and the eavesdropper are respectively given by\cite{Q.Wu1}:
\begin{equation}
\label{received_signals}
\begin{aligned}
&y_r = \sqrt{\alpha_{r}}\sqrt{P}\boldsymbol h_{r}^{H}\boldsymbol\Theta \boldsymbol G^H\boldsymbol\omega s +n_r,\\
&y_e = \sqrt{\alpha_{e}}\sqrt{P}\boldsymbol h_{e}^{H}\boldsymbol\Theta \boldsymbol G^H\boldsymbol\omega s + n_e,
\end{aligned}
\end{equation}
where $\alpha_r, \alpha_e$ are the channel attenuation coefficient for the legitimate user and the eavesdropper, respectively. $P$ is the signal power assigned to the target user. $\boldsymbol h_r\sim\mathcal{CN}(\boldsymbol{0},\sigma_{h_r}^{2}\boldsymbol{I_N}), \boldsymbol h_e\sim\mathcal{CN}(\boldsymbol{0},\sigma_{h_e}^{2}\boldsymbol{I_N})$ are the reflection channels of the desired user and the eavesdropper, respectively. $\boldsymbol\Theta = \diag\{e^{j\theta_1}, e^{j\theta_2},...,e^{j\theta_N}\}$, where $\theta_n$ is the phase shift introduced by the $n$th elements of IRS. $\boldsymbol G$ is the channel matrix between AP and IRS. $\boldsymbol\omega$ is the beamforming vector with $\vert\boldsymbol\omega\vert^2=1$. $s$ is the origin signal satisfying $\mathbb{E}\{\left|s\right|^2\}=1$. $n_r$, $n_e$ are additive white Gaussian noise with variance one at the legitimate receiver and eavesdropper, respectively.

The secure channel capacity in \cite{MIMO_I} can be generalized to the IRS-enhanced channel model. The channel capacity under three different CSI assumptions is shown as follows.
\begin{itemize}
\item Full CSI of both legitimate and eavesdropper channels is known to AP and IRS:
\begin{equation}
\label{C_Perfect_CSI_User_Eve}
\begin{aligned}
C=\log (1+\alpha_{r}P\vert &\boldsymbol h_{r}^{H}\boldsymbol\Theta \boldsymbol G^H\boldsymbol\omega \vert^2) \\
&-\log (1+\alpha_{e}P\vert \boldsymbol h_{e}^{H}\boldsymbol\Theta \boldsymbol G^H\boldsymbol\omega \vert^2 )
\end{aligned}
\end{equation}

\item Only full CSI of the legitimate channel is known to AP and IRS:
\begin{equation}
\label{C_Perfect_CSI_Eve}
\begin{aligned}
C=\log (1+&\alpha_{r}P\vert \boldsymbol h_{r}^{H}\boldsymbol\Theta \boldsymbol G^H\boldsymbol\omega \vert^2 )\\
&-\Expectation_{\boldsymbol{h}_e}\left(\log (1+\alpha_{e}P\vert \boldsymbol h_{e}^{H}\boldsymbol\Theta \boldsymbol G^H\boldsymbol\omega \vert^2 )\right)
\end{aligned}
\end{equation}

\item Only statistical information on the legitimate channel is known to AP and IRS:
\begin{equation}
\label{C_Statistical_CSI_User_Eve}
\begin{aligned}
C=\Expectation_{\boldsymbol{h}_r}(\log &(1+\alpha_{r}P\vert \boldsymbol h_{r}^{H}\boldsymbol\Theta \boldsymbol G^H\boldsymbol\omega \vert^2 ))\\
&-\Expectation_{\boldsymbol{h}_e}\left(\log (1+\alpha_{e}P\vert \boldsymbol h_{e}^{H}\boldsymbol\Theta \boldsymbol G^H\boldsymbol\omega \vert^2 )\right)
\end{aligned}
\end{equation}

\end{itemize}
A practical scenario of IRS-enhanced wiretap channel is when the BS attempts to transmit a private message to some users and treats other users as eavesdroppers, i.e., the eavesdropper is an idle user of the system.
\subsection{Problem Formulation}
Let $R$ denote the secure capacity requirement of the whole system. In this paper, our goal is to design an effective scheme to minimize power by adjusting beamforming vector and phase shift. The corresponding optimization problem is formulated as:
\begin{subequations}
\begin{align}
&(P1)&\underset{\boldsymbol\omega,\boldsymbol\Theta}{\min} & &P&\\
& &s.t. & & &\frac{1+\alpha_{r}P\vert \boldsymbol h_{r}^{H}\boldsymbol\Theta \boldsymbol G^H\boldsymbol\omega \vert^2 }{1+\alpha_{e}P\vert \boldsymbol h_{e}^{H}\boldsymbol\Theta \boldsymbol G^H\boldsymbol\omega \vert^2 } \geq 2^R,\label{Capacity_Constraint}\\
& & & & &\theta_n \in [0, 2\pi), n=1,2...N,\label{Phase_Constraint}\\
& & & & &\vert \boldsymbol{\omega} \vert = 1.\label{Beamforming_Constraint}
\end{align}
\end{subequations}

Problem $(P1)$ is a non-convex problem due to the non-concave objective function with respect to $\boldsymbol\omega$ and $\boldsymbol\theta$. Unfortunately, there is no standard method for solving the problem. In the sequel, we will analyze the design schemes for the cases when the channel $\boldsymbol{G}$ is either a rank-one matrix or a full-rank matrix. We will present some conclusions with given instantaneous CSI or statistical CSI in the following section.

\section{Joint Design for Rank-One $\boldsymbol{G}$}
IRS is expected to be deployed on high building near BS with no obstacle between AP and IRS, so the corresponding channel matrix $\boldsymbol{G}$ is of rank-one \cite{rank_one}. For convenience, let us denote
\begin{equation}
\label{Rank_One_Channel}
\boldsymbol{G}=\boldsymbol{a}\boldsymbol{b}^H,
\end{equation}
where $\boldsymbol{a}$ and $\boldsymbol{b}$ are deterministic vectors. Then, the channel matrix $\boldsymbol{G}$ in problem $(P1)$ is replaced by $\boldsymbol{a}\boldsymbol{b}^H$ and the power optimization problem $(P1)$ can be written as follows:
\begin{subequations}
	\begin{align}
	&(P2)&\underset{\boldsymbol\omega,\boldsymbol\Theta}{\min} & &P&\\
	& &s.t. & & &\frac{1+\alpha_{r}P\vert \boldsymbol h_{r}^{H}\boldsymbol\Theta  \boldsymbol{b} \boldsymbol{a}^H\boldsymbol\omega \vert^2 }{1+\alpha_{e}P\vert \boldsymbol h_{e}^{H}\boldsymbol\Theta \boldsymbol{b} \boldsymbol{a}^H\boldsymbol\omega \vert^2 } \geq 2^R,\label{Capacity_Constraint_P2}\\
	& & & & &\theta_n \in [0, 2\pi), n=1,2...N,\label{Phase_Constraint_P2}\\
	& & & & &\vert \boldsymbol{\omega} \vert = 1.\label{Beamforming_Constraint_P2}
	\end{align}
\end{subequations}
The computationally efficient approaches to solve $(P2)$ will be provided later.
\subsection{Beamforming Design}
In fact, when $\boldsymbol{G}$ is a rank-one matrix, we find that the optimal beamforming vector is exclusively related to the AP-IRS channel.
\begin{Proposition}
	The setting of a rank-one channel between the AP and the IRS can simplify the solution by decoupling the beamforming vector and phase shift design. The optimal beamforming vector is $\boldsymbol{\omega}^* = \frac{\boldsymbol{a}}{\vert \boldsymbol{a} \vert}$.
\end{Proposition}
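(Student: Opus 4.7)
The plan is to exploit the rank-one factorization of $\boldsymbol{G} = \boldsymbol{a}\boldsymbol{b}^H$ to separate the roles of $\boldsymbol{\omega}$ and $\boldsymbol{\Theta}$ inside the capacity constraint. Observe that for any $\boldsymbol{h} \in \{\boldsymbol{h}_r,\boldsymbol{h}_e\}$ we have the scalar factorization $\boldsymbol{h}^H\boldsymbol{\Theta}\boldsymbol{b}\boldsymbol{a}^H\boldsymbol{\omega} = (\boldsymbol{h}^H\boldsymbol{\Theta}\boldsymbol{b})(\boldsymbol{a}^H\boldsymbol{\omega})$, so the squared moduli that appear in both the numerator and denominator of \eqref{Capacity_Constraint_P2} share a common factor $c \triangleq |\boldsymbol{a}^H\boldsymbol{\omega}|^2$. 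Writing $A_r \triangleq |\boldsymbol{h}_r^H\boldsymbol{\Theta}\boldsymbol{b}|^2$ and $A_e \triangleq |\boldsymbol{h}_e^H\boldsymbol{\Theta}\boldsymbol{b}|^2$ (which depend only on $\boldsymbol{\Theta}$), the secrecy constraint reduces to
\begin{equation*}
Pc\bigl(\alpha_r A_r - 2^R \alpha_e A_e\bigr) \;\geq\; 2^R - 1.
\end{equation*}

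Next I would argue that, on any feasible instance with $R > 0$ and $P > 0$, the coefficient $\alpha_r A_r - 2^R \alpha_e A_e$ must be strictly positive; otherwise no choice of $P$ satisfies the inequality. Given this positivity, the minimum transmit power compatible with a fixed $\boldsymbol{\Theta}$ is
\begin{equation*}
P_{\min}(\boldsymbol{\omega},\boldsymbol{\Theta}) \;=\; \frac{2^R - 1}{c\,\bigl(\alpha_r A_r - 2^R\alpha_e A_e\bigr)},
\end{equation*}
a strictly decreasing function of $c$. Minimizing over $\boldsymbol{\omega}$ therefore amounts to maximizing $c = |\boldsymbol{a}^H\boldsymbol{\omega}|^2$ subject to $|\boldsymbol{\omega}|=1$. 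By Cauchy--Schwarz this maximum is $|\boldsymbol{a}|^2$ and is attained uniquely (up to a unit-modulus phase) at $\boldsymbol{\omega}^\star = \boldsymbol{a}/|\boldsymbol{a}|$.

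Finally I would point out that the maximizer $\boldsymbol{\omega}^\star$ does not depend on $\boldsymbol{\Theta}$ at all: the phase shift only enters through $A_r$ and $A_e$, which multiply $c$ but do not alter the monotonicity of $P_{\min}$ in $c$. This establishes the claimed decoupling and identifies the optimal beamformer. The main obstacle in this argument is the feasibility-sign step, namely verifying that $\alpha_r A_r - 2^R\alpha_e A_e > 0$ at the optimum so that the closed-form $P_{\min}$ is valid and monotone in $c$; once that is in hand, the remainder is a direct application of Cauchy--Schwarz.
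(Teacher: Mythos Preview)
Your argument is correct and follows essentially the same route as the paper: both exploit the rank-one factorization $\boldsymbol{G}=\boldsymbol{a}\boldsymbol{b}^H$ to split the capacity constraint into a product of $|\boldsymbol{a}^H\boldsymbol{\omega}|^2$ and a $\boldsymbol{\Theta}$-only term, then solve for the minimum power and maximize $|\boldsymbol{a}^H\boldsymbol{\omega}|$ under $|\boldsymbol{\omega}|=1$. Your treatment is in fact slightly more careful than the paper's, since you make explicit the feasibility-sign requirement $\alpha_r A_r - 2^R\alpha_e A_e > 0$ and invoke Cauchy--Schwarz, whereas the paper leaves these implicit.
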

\begin{proof}
	Starting with the constraint, $(\ref{Capacity_Constraint_P2})$ can be written as
	\begin{equation}
	\label{p_closed_form}
	P \geq \frac{2^R-1}{\vert\boldsymbol{a}^H\boldsymbol\omega \vert^2 \left(\alpha_{r}\vert\boldsymbol h_{r}^{H}\boldsymbol\Theta \boldsymbol b \vert^2-2^R\alpha_{e}\vert\boldsymbol h_{e}^{H}\boldsymbol\Theta \boldsymbol b \vert^2\right)}.
	\end{equation}
	Clearly, the design of beamforming in $\vert\boldsymbol{a}^H\boldsymbol\omega \vert$ and phase shift in $\alpha_{r}\vert\boldsymbol h_{r}^{H}\boldsymbol\Theta \boldsymbol b \vert^2-2^R\alpha_{e}\vert\boldsymbol h_{e}^{H}\boldsymbol\Theta \boldsymbol b \vert^2$ are independent in formula (\ref{p_closed_form}). In other words, there exists a best solution to achieve the maximum of the above two terms simultaneously. Thus, we will only take $\vert\boldsymbol{a}^H\boldsymbol\omega \vert$ into account when designing the beamforming vector. Therefore, the optimal beamforming vector is given by
	\begin{equation}
	\label{w_closed_form1}
	\boldsymbol{\omega}^* = \frac{\boldsymbol{a}}{\vert \boldsymbol{a} \vert}.
	\end{equation}
	This completes the proof.
\end{proof}
	A closed-form solution has been proposed to design beamforming vector if all channels are fully known to both the AP and legitimate users in \cite{closed-form}. The solution can be generalized to the IRS-enhanced system and we have
	\begin{equation}
	\begin{aligned}
	\label{w_closed_form2}
	\boldsymbol\omega&=\gamma_{max} \left[\alpha_{r}\vert\boldsymbol{a}\boldsymbol{b}^H\boldsymbol{\Theta}^H\boldsymbol h_r\vert^2-2^R\alpha_{e}\vert\boldsymbol{a}\boldsymbol{b}^H\boldsymbol{\Theta}^H\boldsymbol h_e\vert^2\right].\\
	&=\gamma_{max}\left[ \left(\alpha_{r}\vert\boldsymbol{b}^H\boldsymbol{\Theta}^H\boldsymbol h_r\vert^2-2^R\alpha_{e}\vert\boldsymbol{b}^H\boldsymbol{\Theta}^H\boldsymbol h_e\vert^2\right) \boldsymbol{a}\boldsymbol{a}^H \right]\\
	&=\gamma_{max}\left[\boldsymbol{a}\boldsymbol{a}^H \right]\\
	&=\mathop{\max_{\vert\boldsymbol{\omega}\vert=1}}\boldsymbol{\omega^H}\boldsymbol{a}\boldsymbol{a^H}\boldsymbol{\omega}\\
	&=\mathop{\max_{\vert\boldsymbol{\omega}\vert=1}}\vert \boldsymbol{a}^H\boldsymbol{\omega}\vert.
	\end{aligned}
	\end{equation}

The proposed scheme of (\ref{w_closed_form1}) is consistent with eigenvalue-based algorithm \cite{closed-form} in essence in the case of the rank-one channel.
\subsection{Phase Shift Design}
In the previous subsection, we show that the rank-one channel of the AP-IRS link has the advantage to optimally design beamforming and phase shift conveniently. Hence, we only concentrate on the second term in the denominator on the right-hand side of (\ref{p_closed_form}). The optimization problem about phase shift design is given by:
\begin{subequations}
\begin{align}
&(P3)&\mathop{\max}_{\boldsymbol\Theta}& &\alpha_{r}\vert\boldsymbol h_{r}^{H}\boldsymbol\Theta \boldsymbol b \vert^2-2^R\alpha_{e}\vert\boldsymbol h_{e}^{H}\boldsymbol\Theta \boldsymbol b \vert^2&\\
&&s.t.& &\theta_n \in [0, 2\pi), n=1,2...N.\quad\quad&
\end{align}
\end{subequations}
Let $\boldsymbol{v}=\left[e^{j\theta_1},e^{j\theta_2},...,e^{j\theta_N}\right]^T, \boldsymbol{B}=\diag(\boldsymbol{b})$. Then, the problem (P3) can be written as:
\begin{subequations}
	\begin{align}
	(P4)&\mathop{\max}_{\boldsymbol v} &\boldsymbol v^H\boldsymbol B^H\left(\alpha_{r}\boldsymbol h_r\boldsymbol h_{r}^{H}-2^R\alpha_{e}\boldsymbol h_{e}\boldsymbol h_{e}^{H}\right) \boldsymbol B\boldsymbol v&\\
	&s.t. &\theta_n \in [0, 2\pi), n=1,2...N.\quad\quad\quad\quad\qquad&
	\end{align}
\end{subequations}

Define $\boldsymbol A\triangleq \boldsymbol B^H\left(\alpha_{r}\boldsymbol h_r\boldsymbol h_{r}^{H}-2^R\alpha_{e}\boldsymbol h_{e}\boldsymbol h_{e}^{H}\right)\boldsymbol B$. The problem has been discussed in \cite{NP-hard}, which proved that this is an NP-hard problem. Two common solutions are provided to solve above problem $(P4)$, including SDP and PGD.
\begin{itemize}
	\item $(SDP)$ Note that $\boldsymbol v^H(-\boldsymbol A)\boldsymbol v =
	-\Tr\left(\boldsymbol A\boldsymbol V\right)$, where $\boldsymbol V = \boldsymbol v\boldsymbol v^H$. Clearly, $\boldsymbol V$ is a positive
	semidefinite matrix, i.e., $\boldsymbol V \succeq 0$, and $rank(\boldsymbol V) = 1$. By relaxing
	the rank-one constraint on $\boldsymbol V$, we have
	\begin{equation*}
	\begin{aligned}
	(P4)\qquad\mathop{\min}_{\boldsymbol V} &\quad -\Tr(\boldsymbol{A}\boldsymbol{V})\\
	s.t.&\quad \diag(\boldsymbol{V})=1,\\
	&\quad \boldsymbol{V} \succeq \boldsymbol{0}.
	\end{aligned}
	\end{equation*}
	Problem $(P4)$ is a standard semidefinite problem (SDP) and can be effectively solved via CVX software. But the key point is to obtain the near optimal solution $\boldsymbol{v}$ from $\boldsymbol{V}^*$. Anthony \cite{QCQP_SDP} and Goemans \cite{QCQP_SDP_Geoemans} demonstrated their algorithms to obtain good approximation guarantees for the model. More details are provided in Algorithm 1.
	
	\item $(PGD)$ We employ gradient search to monotonically decrease the objective function. The derivative of objective function can be expressed as
	\begin{equation}
	\label{derivative}
	\frac{\partial \left(-\boldsymbol{v}^H\boldsymbol{A}\boldsymbol{v}\right)}{\partial \theta_i}=j\sum_{n=1}^{N}A_{in}e^{j\left(\theta_n-\theta_i\right)}-j\sum_{m=1}^{N}A_{mi}e^{j\left(\theta_i-\theta_n\right)}
	\end{equation}
	The objective function decreases fastest if one goes from $\boldsymbol{v}$  in the direction of the negative gradient $\boldsymbol{p}$, in which $\boldsymbol{p}_i=\frac{\partial \left(-\boldsymbol{v}^H\boldsymbol{A}\boldsymbol{v}\right)}{\partial \theta_i}$. It follows that,
	\begin{equation*}
	\begin{aligned}
	\boldsymbol v'_{k+1}&=\boldsymbol v'_{k} -\mu \boldsymbol p_k,\\
	\boldsymbol v_{k+1}&=e^{j \Arg\left(\boldsymbol v'_{k+1}\right)},
	\end{aligned}
	\end{equation*}
	where $\boldsymbol v'_{k}$ denotes the induced phases at step $k$ and $\boldsymbol{p}^k$ is the adopted ascent direction at step $k$. $\mu$ is a suitable step size. More details are provided in Algorithm 2.
\end{itemize}

\begin{algorithm}
	\caption{Joint Design with SDP Algorithm, Rank-One Channel $\boldsymbol G$ }
	\begin{algorithmic}[1]
		\REQUIRE All channel state information $\boldsymbol G = \boldsymbol a \boldsymbol b^H$, $\boldsymbol h_{r}$, $\boldsymbol h_{e}$
		\ENSURE Beamforming vector $\boldsymbol{\omega}$ and Phase Shift $\boldsymbol\Theta$;
		\STATE Compute beamforming vector with channel state information $\boldsymbol G$ according to $\boldsymbol{\omega}^* = \frac{\boldsymbol{a}}{\vert \boldsymbol{a} \vert}$;
		\STATE Solve the problem $(P4)$ with CVX and obtain an optimal solution $\boldsymbol V^*$;
		\STATE Since $\boldsymbol V^*$ is positive semidefinite, we can obtain a Eigendecomposition $\boldsymbol V^*=\boldsymbol U^*\boldsymbol \Sigma \boldsymbol U$ and $\boldsymbol \Sigma \succeq 0$;
		\STATE Obtain suboptimal solution $\boldsymbol v=\boldsymbol U^*\boldsymbol \Sigma^\frac{1}{2}\boldsymbol r$, $\boldsymbol r\sim\mathcal{CN}(\boldsymbol{0},\boldsymbol{I})$;
		\STATE Set $\boldsymbol{\Theta}=\diag\left(e^{j\Arg\left(\boldsymbol{v}\right)}\right)$.
	\end{algorithmic}
\end{algorithm}

\begin{algorithm}
	\caption{Joint Design with PGD Algorithm, Rank-One Channel G }
	\begin{algorithmic}[1]
		\REQUIRE All channel state information $\boldsymbol G = \boldsymbol a \boldsymbol b^H$, $\boldsymbol h_{r}$, $\boldsymbol h_{e}$
		\ENSURE Beamforming vector $\boldsymbol{\omega}$ and Phase Shift $\boldsymbol\Theta$;
		\STATE Beamforming vector design is the same as Algorithm 1;
		\STATE Set $\boldsymbol v$ as a random vector with each element $v_k \in [-\pi, \pi)$
		\REPEAT
			\STATE Compute $\boldsymbol{p}^k$ according to (\ref{derivative});
			\STATE $\mu$=backtrack line search \cite{PGD};
			\STATE $\boldsymbol v'_{k+1}=\boldsymbol v'_{k} -\mu \boldsymbol p_k$;
		\UNTIL $\vert\boldsymbol p_k\vert<\epsilon$
		\STATE Set $\boldsymbol{\Theta}=\diag\left(e^{j\Arg(\boldsymbol v'_{k+1})}\right)$.
	\end{algorithmic}
\end{algorithm}

\subsection{Statistical Channel Model}
\begin{Proposition}
	If statistical CSI of eavesdropper and full CSI of legitimate user are provided, the optimal beamforming vector and phase shift is $\boldsymbol{\omega}^* = \frac{\boldsymbol{a}}{\vert \boldsymbol{a} \vert}$ and $\boldsymbol{v} = e^{-j\Arg\left(\diag(\boldsymbol{h}_r^H)\boldsymbol{b}\right)+\boldsymbol \alpha}$, respectively. $\boldsymbol\alpha$ is a vector with the same elements.
\end{Proposition}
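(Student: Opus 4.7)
The plan is to plug the rank-one factorization $\boldsymbol{G}=\boldsymbol{a}\boldsymbol{b}^H$ into capacity expression (\ref{C_Perfect_CSI_Eve}), which splits each inner product as $\boldsymbol{h}_{r/e}^H\boldsymbol{\Theta}\boldsymbol{G}^H\boldsymbol{\omega}=(\boldsymbol{h}_{r/e}^H\boldsymbol{\Theta}\boldsymbol{b})(\boldsymbol{a}^H\boldsymbol{\omega})$. The pivotal observation is that, because $\boldsymbol{h}_e\sim\mathcal{CN}(\boldsymbol{0},\sigma_{h_e}^2\boldsymbol{I}_N)$ is rotationally invariant and $\boldsymbol{\Theta}$ is diagonal-unitary, the vector $\boldsymbol{\Theta}^H\boldsymbol{h}_e$ has the same law as $\boldsymbol{h}_e$; hence $\boldsymbol{h}_e^H\boldsymbol{\Theta}\boldsymbol{b}=(\boldsymbol{\Theta}^H\boldsymbol{h}_e)^H\boldsymbol{b}$ shares the distribution of $\boldsymbol{h}_e^H\boldsymbol{b}$, and the eavesdropper expectation $\Expectation_{\boldsymbol{h}_e}[\log(1+\alpha_e P|\boldsymbol{a}^H\boldsymbol{\omega}|^2|\boldsymbol{h}_e^H\boldsymbol{\Theta}\boldsymbol{b}|^2)]$ loses all dependence on $\boldsymbol{\Theta}$.

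With $\boldsymbol{\Theta}$ eliminated from the eavesdropper term, I would introduce $q:=P|\boldsymbol{a}^H\boldsymbol{\omega}|^2$ and $\rho:=|\boldsymbol{h}_r^H\boldsymbol{\Theta}\boldsymbol{b}|^2$, so the secrecy-rate constraint rewrites as
\begin{equation*}
\log(1+\alpha_r q\rho)-\Expectation_{\boldsymbol{h}_e}\!\left[\log(1+\alpha_e q|\boldsymbol{h}_e^H\boldsymbol{b}|^2)\right]\geq R.
\end{equation*}
The left-hand side is strictly increasing in $\rho$ for any fixed $q$, so the minimum feasible value $q^\star(\rho)$ is non-increasing in $\rho$. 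Since $P=q^\star(\rho)/|\boldsymbol{a}^H\boldsymbol{\omega}|^2$, the minimization over $(\boldsymbol{\omega},\boldsymbol{\Theta})$ decouples into independently maximizing $|\boldsymbol{a}^H\boldsymbol{\omega}|^2$ and $\rho$. Cauchy--Schwarz on the former immediately yields $\boldsymbol{\omega}^\star=\boldsymbol{a}/|\boldsymbol{a}|$, reproducing the beamformer of Proposition 1.

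For the phase-shift piece, expanding componentwise gives $\boldsymbol{h}_r^H\boldsymbol{\Theta}\boldsymbol{b}=\sum_{n=1}^{N}h_{r,n}^{*}b_n e^{j\theta_n}$, and the triangle inequality furnishes $|\boldsymbol{h}_r^H\boldsymbol{\Theta}\boldsymbol{b}|\leq\sum_n|h_{r,n}^{*}b_n|$ with equality iff all summands share a common argument. Setting $\theta_n=\alpha-\Arg(h_{r,n}^{*}b_n)$ for any free scalar $\alpha$ achieves equality, and assembling into vector form yields exactly $\boldsymbol{v}=e^{-j\Arg(\diag(\boldsymbol{h}_r^H)\boldsymbol{b})+\boldsymbol{\alpha}}$ with $\boldsymbol{\alpha}=\alpha\boldsymbol{1}$.

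The main obstacle, and really the only nontrivial insight, is the first step: recognizing that the circular symmetry of $\boldsymbol{h}_e$ removes every trace of $\boldsymbol{\Theta}$ from the expected eavesdropper rate. Without it, the phase-shift design would have to trade off legitimate-user gain against eavesdropper leakage and no such clean closed form would be available; with it, the problem collapses to pure coherent combining on the AP-IRS-user path and the answer drops straight out of the triangle inequality.
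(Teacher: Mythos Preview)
Your proof is correct and shares the paper's essential structure: substitute $\boldsymbol{G}=\boldsymbol{a}\boldsymbol{b}^H$, use the rotational invariance of $\boldsymbol{h}_e\sim\mathcal{CN}(\boldsymbol{0},\sigma_{h_e}^2\boldsymbol{I}_N)$ to strip $\boldsymbol{\Theta}$ from the eavesdropper expectation, then independently maximize $|\boldsymbol{a}^H\boldsymbol{\omega}|$ and $|\boldsymbol{h}_r^H\boldsymbol{\Theta}\boldsymbol{b}|$. The one noteworthy difference is in how you establish the beamformer. The paper differentiates $C$ with respect to $|\boldsymbol{a}^H\boldsymbol{\omega}|^2$ and asserts the derivative is nonnegative, but does not actually justify the sign (it is not obvious, since $|\boldsymbol{a}^H\boldsymbol{\omega}|^2$ multiplies both the legitimate and eavesdropper SNRs). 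Your reparametrization $q=P|\boldsymbol{a}^H\boldsymbol{\omega}|^2$ sidesteps this entirely: once the constraint depends only on $(q,\rho)$, writing $P=q/|\boldsymbol{a}^H\boldsymbol{\omega}|^2$ makes it immediate that $|\boldsymbol{a}^H\boldsymbol{\omega}|^2$ should be maximized regardless of whether $C$ is monotone in it. This is a cleaner, more elementary route to the same conclusion; the phase-shift argument via the triangle inequality is identical in substance to the paper's.
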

\begin{proof}
	Consider $\boldsymbol h_e\sim\mathcal{CN}(\boldsymbol{0},\sigma_{h_e}^{2}\boldsymbol I_N)$.
	\begin{equation}
	\begin{aligned}
	C&=\log \left(1+\alpha_{r}P\vert \boldsymbol h_{r}^{H}\boldsymbol\Theta \boldsymbol b\vert^2 \vert\boldsymbol a^H\boldsymbol\omega \vert^2 \right)\\
	&\quad\quad\quad\quad-\Expectation_{\boldsymbol{h}_e}\left(\log \left(1+\alpha_{e}P\vert \boldsymbol h_{e}^{H}\boldsymbol\Theta \boldsymbol b\vert^2\vert\boldsymbol a^H\boldsymbol\omega \vert^2 \right)\right)\\
	&=\log \left(1+\alpha_{r}P\vert \boldsymbol h_{r}^{H}\boldsymbol\Theta \boldsymbol b\vert^2 \vert\boldsymbol a^H\boldsymbol\omega \vert^2 \right)\\
	&\quad\quad\quad\quad-\Expectation_{\boldsymbol{h}_e}\left(\log \left(1+\alpha_{e}P\vert \boldsymbol h_{e}^{H}\boldsymbol b\vert^2\vert\boldsymbol a^H\boldsymbol\omega \vert^2 \right)\right)
	\end{aligned}
	\end{equation}
	For any given phase shift, we have
	\begin{equation}
	\begin{aligned}
	\frac{\partial C}{\partial \vert\boldsymbol a^H\boldsymbol\omega \vert^2}&=\frac{\alpha_{r}P\vert \boldsymbol h_{r}^{H}\boldsymbol\Theta \boldsymbol b\vert^2}{1+\alpha_{r}P\vert \boldsymbol h_{r}^{H}\boldsymbol\Theta \boldsymbol b\vert^2 \vert\boldsymbol a^H\boldsymbol\omega \vert^2}\\
	&\quad\quad-\Expectation_{\boldsymbol{h}_e}\left(\frac{\alpha_{e}P\vert \boldsymbol h_{e}^{H}\boldsymbol b\vert^2}{1+\alpha_{e}P\vert \boldsymbol h_{e}^{H}\boldsymbol b\vert^2\vert\boldsymbol a^H\boldsymbol\omega \vert^2}\right) \geq 0
	\end{aligned}
	\end{equation}
	Note that $C$ is an increasing function. The optimal beamforming vector makes $\vert\boldsymbol a^H\boldsymbol\omega \vert^2$ achieve maximum. Similar to section III-A, the optimal bramforming vector is $\boldsymbol{\omega}^* = \frac{\boldsymbol{a}}{\vert \boldsymbol{a} \vert}$. 
	
	Thus, for a fixed eavesdropper rate of $F_1(\alpha_{e}\alpha_{h_e}\sigma_{h_e}^{2}P\vert \boldsymbol a \vert^2 \vert \boldsymbol b\vert^2)$, where $F_1(x)$ is defined in Lemma 3 of \cite{statistical_channel}, $\vert \boldsymbol h_{r}^{H}\boldsymbol\Theta \boldsymbol b\vert$ should be maximized. So, the optimal phase shift is $\boldsymbol{v} = e^{-j\Arg\left(\diag(\boldsymbol{h}_r^H)\boldsymbol{b}\right)+\boldsymbol \alpha}$, where $\boldsymbol\alpha$ is a vector with the same elements.
	
	This completes the proof.
\end{proof}
\begin{Proposition}
	If only statistical CSI of both the legitimate user and the eavesdropper are provided, the optimal beamforming vector is $\boldsymbol{\omega}^* = \frac{\boldsymbol{a}}{\vert \boldsymbol{a} \vert}$ and the mathematical expectation of the maximum achievable secrecy rate is $F_1(\alpha_{r}\alpha_{h_r}\sigma_{h_r}^{2}P\vert \boldsymbol a \vert^2 \vert \boldsymbol b\vert^2)-F_1(\alpha_{e}\alpha_{h_e}\sigma_{h_e}^{2}P\vert \boldsymbol a \vert^2 \vert \boldsymbol b\vert^2)$, irrespective of phase shift $\boldsymbol{\Theta}$.
\end{Proposition}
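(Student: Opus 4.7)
The plan is to mimic the proof of Proposition~2, but first exploit a distributional invariance that makes $\boldsymbol{\Theta}$ disappear entirely from the objective.  Since $\boldsymbol{\Theta}$ is diagonal with unit-modulus entries and $\boldsymbol{h}_r\sim\mathcal{CN}(\boldsymbol{0},\sigma_{h_r}^{2}\boldsymbol{I}_N)$ is isotropic CSCG, each entry of $\boldsymbol{\Theta}^H\boldsymbol{h}_r$ is a deterministic phase times an i.i.d.\ $\mathcal{CN}(0,\sigma_{h_r}^{2})$ variable, and hence has the same distribution as the original entry.  Therefore $|\boldsymbol{h}_r^H\boldsymbol{\Theta}\boldsymbol{b}|^2$ is equal in distribution to $|\boldsymbol{h}_r^H\boldsymbol{b}|^2$, and the same holds for $\boldsymbol{h}_e$.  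After substituting $\boldsymbol{G}=\boldsymbol{a}\boldsymbol{b}^H$ into (\ref{C_Statistical_CSI_User_Eve}) and applying this identity inside both expectations, the ergodic secrecy rate collapses to
\begin{equation*}
C=\Expectation_{\boldsymbol{h}_r}\log\!\left(1+\alpha_r P|\boldsymbol{h}_r^H\boldsymbol{b}|^2|\boldsymbol{a}^H\boldsymbol{\omega}|^2\right)-\Expectation_{\boldsymbol{h}_e}\log\!\left(1+\alpha_e P|\boldsymbol{h}_e^H\boldsymbol{b}|^2|\boldsymbol{a}^H\boldsymbol{\omega}|^2\right),
\end{equation*}
which contains no occurrence of $\boldsymbol{\Theta}$.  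This is precisely the \emph{irrespective of phase shift} part of the claim.

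Next, I would treat $C$ as a function of the single scalar $t\triangleq|\boldsymbol{a}^H\boldsymbol{\omega}|^2$ and differentiate under each expectation exactly as in the proof of Proposition~2.  Under the operating regime in which a positive secrecy rate is achievable at all (which in particular requires $\alpha_r\sigma_{h_r}^{2}\ge\alpha_e\sigma_{h_e}^{2}$), the argument would show $\partial C/\partial t\ge 0$, so $C$ is non-decreasing in $t$.  Cauchy--Schwarz with $|\boldsymbol{\omega}|=1$ then gives $t\le|\boldsymbol{a}|^2$ with equality precisely at $\boldsymbol{\omega}^*=\boldsymbol{a}/|\boldsymbol{a}|$, which is the asserted optimal beamformer.

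For the closed-form rate, I would substitute $\boldsymbol{\omega}^*$ so that $t=|\boldsymbol{a}|^2$ and note that $\boldsymbol{h}_r^H\boldsymbol{b}\sim\mathcal{CN}(0,\sigma_{h_r}^{2}|\boldsymbol{b}|^2)$, making $|\boldsymbol{h}_r^H\boldsymbol{b}|^2$ exponentially distributed.  The expectation $\Expectation\log(1+\alpha_r\sigma_{h_r}^{2}P|\boldsymbol{a}|^2|\boldsymbol{b}|^2\cdot Z)$ for $Z\sim\mathrm{Exp}(1)$ is exactly $F_1(\alpha_r\alpha_{h_r}\sigma_{h_r}^{2}P|\boldsymbol{a}|^2|\boldsymbol{b}|^2)$ as defined in Lemma~3 of \cite{statistical_channel} (the extra $\alpha_{h_r}$ factor absorbing the constant scaling that lemma uses), and the eavesdropper term is handled identically, yielding the claimed expression.

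The hard part is the monotonicity step: because $C$ is a difference of two increasing, concave-in-$t$ expectations, $\partial C/\partial t\ge 0$ does not follow from elementary inequalities alone.  A clean justification would compare the two integrands pointwise after invoking $\alpha_r\sigma_{h_r}^{2}\ge\alpha_e\sigma_{h_e}^{2}$ together with the concavity of $x\mapsto x/(1+xt)$, or alternatively sidestep monotonicity altogether by showing directly that the gradient in $\boldsymbol{\omega}$ of each expectation is aligned with $\boldsymbol{a}$ --- essentially the same delicate point that is handled somewhat informally in the proof of Proposition~2.
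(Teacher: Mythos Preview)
Your proposal is correct and follows essentially the same approach as the paper's own proof: eliminate $\boldsymbol{\Theta}$ via the rotational invariance of the isotropic Gaussian channels, argue that $C$ is nondecreasing in $|\boldsymbol{a}^H\boldsymbol{\omega}|^2$, maximize that scalar via Cauchy--Schwarz, and evaluate the resulting expectations through $F_1$. If anything, your treatment is more careful than the paper's, which simply asserts both the removal of $\boldsymbol{\Theta}$ (its step~(a)) and the nonnegativity of $\partial C/\partial|\boldsymbol{a}^H\boldsymbol{\omega}|^2$ without the justifications you supply.
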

\begin{proof}
	Due to the assumption of $\boldsymbol h_r\sim\mathcal{CN}(\boldsymbol{0},\sigma_{h_r}^{2}\boldsymbol I_N)$ and $\boldsymbol h_e\sim\mathcal{CN}(\boldsymbol{0},\sigma_{h_e}^{2}\boldsymbol I_N)$ , we have
	\begin{equation}
	\label{statistical channel model with removed theta}
	\begin{aligned}
	C&=\Expectation_{\boldsymbol{h}_r}\left(\log \left(1+\alpha_{r}P\vert \boldsymbol h_{r}^{H}\boldsymbol\Theta \boldsymbol b\vert^2 \vert\boldsymbol a^H\boldsymbol\omega \vert^2 \right)\right)\\
	&\quad\quad\quad-\Expectation_{\boldsymbol{h}_e}\left(\log \left(1+\alpha_{e}P\vert \boldsymbol h_{e}^{H}\boldsymbol\Theta \boldsymbol b\vert^2\vert\boldsymbol a^H\boldsymbol\omega \vert^2 \right)\right)\\
	&\overset{(a)}{=}\Expectation_{\boldsymbol{h}_r}\left(\log \left(1+\alpha_{r}P\vert \boldsymbol h_{r}^{H}\boldsymbol b\vert^2 \vert\boldsymbol a^H\boldsymbol\omega \vert^2 \right)\right)\\
	&\quad\quad\quad-\Expectation_{\boldsymbol{h}_e}\left(\log \left(1+\alpha_{e}P\vert \boldsymbol h_{e}^{H}\boldsymbol b\vert^2\vert\boldsymbol a^H\boldsymbol\omega \vert^2 \right)\right)
	\end{aligned}
	\end{equation}
	(a) indicates the setting of phase shift does not affect the expectation of secrecy capacity. For any phase shift, we have
	\begin{equation}
	\begin{aligned}
	\frac{\partial C}{\partial \vert\boldsymbol a^H\boldsymbol\omega \vert^2}&=\Expectation_{\boldsymbol{h}_e}\left(\frac{\alpha_{r}P\vert \boldsymbol h_{r}^{H} \boldsymbol b\vert^2}{1+\alpha_{r}P\vert \boldsymbol h_{r}^{H} \boldsymbol b\vert^2 \vert\boldsymbol a^H\boldsymbol\omega \vert^2}\right)\\
	&\quad\quad-\Expectation_{\boldsymbol{h}_e}\left(\frac{\alpha_{e}P\vert \boldsymbol h_{e}^{H}\boldsymbol b\vert^2}{1+\alpha_{e}P\vert \boldsymbol h_{e}^{H}\boldsymbol b\vert^2\vert\boldsymbol a^H\boldsymbol\omega \vert^2}\right) \geq 0
	\end{aligned}
	\end{equation}
	Similar to proposition 2, the optimal beamforming vector is $\boldsymbol{\omega}^* = \frac{\boldsymbol{a}}{\vert \boldsymbol{a} \vert}$. Substituting $\boldsymbol{\omega}$ in (\ref{statistical channel model with removed theta}) with the optimal beamforming vector, we have
	\begin{equation}
	\label{optimal statistical channel model}
	\begin{aligned}
	C = F_1(\alpha_{r}\alpha_{h_r}\sigma_{h_r}^{2}P\vert \boldsymbol a \vert^2 \vert \boldsymbol b\vert^2)-F_1(\alpha_{e}\alpha_{h_e}\sigma_{h_e}^{2}P\vert \boldsymbol a \vert^2 \vert \boldsymbol b\vert^2)
	\end{aligned}
	\end{equation}
	This completes the proof.
\end{proof}

\section{Joint Design for Full-Rank G}
The above work makes reasonable use of the advantages of rank-one channel in design. However, when we take the Rician channel model or the Rayleigh channel model, $\boldsymbol{G}$ will not be rank-one any more. In this section, we will discuss the method to design beamforming vector and phase shift with full-rank channel $\boldsymbol{G}$.

Similarly, we deduce the power condition from $(P1)$ as
\begin{equation}
\label{p_close_form2}
P \geq \frac{2^R-1}{\boldsymbol{\omega}^H\boldsymbol{G}\boldsymbol{\Theta}^H\left(\alpha_{r}\boldsymbol{h}_r\boldsymbol{h}_r^H-2^R\alpha_{e}\boldsymbol{h}_e\boldsymbol{h}_e^H\right)\boldsymbol{\Theta}\boldsymbol{G}^H\boldsymbol{\omega}}.
\end{equation}
Hence, the objective function in $(P1)$ can be converted to the form of $\boldsymbol\omega$ and $\boldsymbol\Theta$.
\begin{subequations}
	\begin{align}
	&(P5)&\mathop{\max}_{\boldsymbol\omega,\boldsymbol\Theta} &\quad\boldsymbol{\omega}^H\boldsymbol{G}\boldsymbol{\Theta}^H\left(\alpha_{r}\boldsymbol{h}_r\boldsymbol{h}_r^H-2^R\alpha_{e}\boldsymbol{h}_e\boldsymbol{h}_e^H\right)\boldsymbol{\Theta}\boldsymbol{G}^H\boldsymbol{\omega}&\\
	&&s.t. &\quad\quad\theta_n \in [0, 2\pi), n=1,2...N.&\\
	&&&\quad\quad\vert\boldsymbol\omega\vert=1.&
	\end{align}
\end{subequations}

Compared with (\ref{p_closed_form}), $\boldsymbol{G}$ and $\boldsymbol{\Theta}$ interact with each other and cannot be separated in (\ref{p_close_form2}). Next, we will introduce some schemes to optimize it.
\subsection{Beamforming Design}
Let
\begin{subequations}
	\begin{align}
\boldsymbol h_r^{'H} &= \boldsymbol{h}_r^H\boldsymbol{\Theta}\boldsymbol{G}^H\\
\boldsymbol h_e^{'H} &= \boldsymbol{h}_e^H\boldsymbol{\Theta}\boldsymbol{G}^H
	\end{align}
\end{subequations}
denote the IRS-enhanced system channels for the AP-IRS-User link and the AP-IRS-Eve link, respectively.

For any given $\boldsymbol\Theta$, channels of AP-IRS-User link and the AP-IRS-Eve link are fixed. Obviously, it reduces to a standard system with a single legitimate user and a single eavesdropper. Reference \cite{closed-form} further derived the closed-form solution of the problem based on the dual problem and KKT conditions.
We apply the eigenvalue-based algorithm to obtain the closed-form solution of optimal beamforming vector.

\begin{Proposition}
	For any given $\boldsymbol\Theta$, the optimal solution of $(P5)$ is given by
\begin{subequations}
	\begin{align}
		\boldsymbol\omega&=\gamma_{max} \left(\alpha_r\boldsymbol h_r^{'} \boldsymbol h_r^{'H}-2^R\alpha_e\boldsymbol h_e^{'H}\boldsymbol h_e^{'H}\right),\\
		\boldsymbol\lambda^*&=\lambda_{max} \left(\alpha_r\boldsymbol h_r^{'} \boldsymbol h_r^{'H}-2^R\alpha_e\boldsymbol h_e^{'H}\boldsymbol h_e^{'H}\right),\\
		\boldsymbol\omega^*&=\sqrt{\frac{2^R-1}{\lambda^*}}\frac{\boldsymbol \omega}{\vert \boldsymbol\omega \vert},
	\end{align}
\end{subequations}
	where $\omega^*$ represents the optimal beamforming vector with the determined phase shift.
\end{Proposition}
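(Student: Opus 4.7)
The plan is to recognize that once $\boldsymbol\Theta$ is fixed, Problem $(P5)$ collapses to a standard Rayleigh-quotient maximization, so the closed form follows from the spectral theorem for Hermitian matrices. First I would substitute the effective-channel notation $\boldsymbol h_r'^H = \boldsymbol h_r^H \boldsymbol\Theta \boldsymbol G^H$ and $\boldsymbol h_e'^H = \boldsymbol h_e^H \boldsymbol\Theta \boldsymbol G^H$ into the denominator of (\ref{p_close_form2}), which rewrites the objective of $(P5)$ as $\boldsymbol\omega^H \boldsymbol A \boldsymbol\omega$ with
\begin{equation*}
\boldsymbol A \triangleq \alpha_r \boldsymbol h_r' \boldsymbol h_r'^H - 2^R \alpha_e \boldsymbol h_e' \boldsymbol h_e'^H .
\end{equation*}
Since $\boldsymbol A$ is a difference of two Hermitian rank-one matrices, it is itself Hermitian and admits a real spectral decomposition.

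Next I would invoke the Courant--Fischer characterization: subject to $|\boldsymbol\omega|=1$, the maximum of $\boldsymbol\omega^H \boldsymbol A \boldsymbol\omega$ equals $\lambda_{\max}(\boldsymbol A)$ and is attained at the principal eigenvector $\gamma_{\max}(\boldsymbol A)$. This directly yields the stated direction $\boldsymbol\omega = \gamma_{\max}(\boldsymbol A)$ and the stated value $\lambda^* = \lambda_{\max}(\boldsymbol A)$. Substituting this optimum back into (\ref{p_close_form2}) gives the minimum feasible power $P^* = (2^R-1)/\lambda^*$. The scaled expression $\boldsymbol\omega^* = \sqrt{(2^R-1)/\lambda^*}\,\boldsymbol\omega/|\boldsymbol\omega|$ is then obtained by reinterpreting $\boldsymbol\omega^*$ as the composite transmit vector $\sqrt{P}\,\boldsymbol\omega$ that absorbs the power into the beamformer, so that $|\boldsymbol\omega^*|^2 = P^*$ reproduces the minimum power while preserving the optimal direction. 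This step is purely bookkeeping once the Rayleigh-quotient argument is in place, and it matches the standard reformulation used in the conventional MISO wiretap result of \cite{closed-form}.

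The main obstacle I expect is not the maximization itself but the feasibility/sign condition on $\lambda_{\max}(\boldsymbol A)$. When the eavesdropper term $2^R \alpha_e \boldsymbol h_e' \boldsymbol h_e'^H$ dominates $\alpha_r \boldsymbol h_r' \boldsymbol h_r'^H$ (which can happen because $\boldsymbol A$ has one positive and one negative rank-one contribution, giving it inertia $(1,1,N{-}2)$ at most), $\lambda^*$ could be non-positive and the secrecy constraint (\ref{Capacity_Constraint}) would be infeasible for any finite $P$. A rigorous write-up should therefore flag that the closed form is valid under $\lambda^*>0$, and otherwise declare the subproblem infeasible for the given $\boldsymbol\Theta$, deferring to the outer phase-shift design to restore feasibility. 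Apart from this caveat, the derivation is a direct application of the eigenvalue characterization of Hermitian quadratic forms and matches the eigenvalue-based beamforming result of \cite{closed-form} specialized to the IRS-enhanced channels $\boldsymbol h_r'$ and $\boldsymbol h_e'$.
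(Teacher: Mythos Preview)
Your proposal is correct and matches the paper's own treatment: the paper does not give a standalone proof but simply states that the result follows from Lemma~1 of \cite{closed-form} after replacing $\boldsymbol h_r^H$, $\boldsymbol h_e^H$ by the effective channels $\boldsymbol h_r'^H$, $\boldsymbol h_e'^H$. Your Rayleigh-quotient/Courant--Fischer argument is exactly the eigenvalue characterization underlying that lemma (the cited reference reaches it via KKT/duality, whereas you invoke the spectral theorem directly, which is slightly more elementary but equivalent), and your additional feasibility caveat on $\lambda^*>0$ is a sound refinement the paper leaves implicit.
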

The proof of this proposition is similar to that of Lemma 1 in \cite{closed-form} with $\boldsymbol{h}_r^H\boldsymbol{\Theta}\boldsymbol{G}^H$ substituted by $\boldsymbol h_{r}^{'H}$.
\subsection{Phase Shift Design}
In this subsection, we optimize phase shift on the premise of a given beamforming vector. Let $\boldsymbol B' = \diag(\boldsymbol{G^H}\boldsymbol\omega)$ and $\boldsymbol A'\triangleq \boldsymbol B^{'H}\left(\alpha_{r}\boldsymbol h_r\boldsymbol h_{r}^{H}-2^R\alpha_{e}\boldsymbol h_{e}\boldsymbol h_{e}^{H}\right)\boldsymbol B'$. The problem is similar to the phase shift design in Section III-B. Therefore, SDP and PGD are also suitable for full-rank channel design.

Algorithm 3 combines the eigenvalue-based algorithm for beamforming with the SDP/PGD algorithm in the case of full-rank channel.

\begin{algorithm} 
	\caption{Joint Beamforming Vector Design in AP and Phase Shift Design in IRS, Full-Rank Channel $\boldsymbol G$}
	\begin{algorithmic}[1]
		\REQUIRE All channel state information $\boldsymbol h_{r}$, $\boldsymbol h_e$, $\boldsymbol G$
		\ENSURE Beamforming vector $\boldsymbol{\omega}$ and Phase Shift $\boldsymbol\Theta$;
		\STATE Set the initial $\boldsymbol\Theta$ as identity matrix $\boldsymbol{I}_{N}$;
		\STATE Compute beamforming vector with given $\boldsymbol\Theta$ according to the closed-form solution in Proposition 3;
		\REPEAT
		\STATE Compute phase shift similar to algorithm 1 step2-5 or algorithm 2 step 2-8
		\UNTIL {Power of beamforming vector does not change any more.}
	\end{algorithmic}
\end{algorithm}

\section{Simulation Results}
The formulation of IRS-enhanced system based on the fully cartesian coordinates has an important advantage in describing positions of all components. A uniform linear array at AP and a uniform linear array of passive reflecting elements at IRS are located at $(0, 0, 25)m$ and $(0, 100, 40)m$ respectively. In practical systems, as AP and IRS are deployed in advance, we assume the AP-IRS channel is dominated by the LoS link in rank-one channel. When $\boldsymbol{G}$ only has LoS link, we model it as the $\boldsymbol{G}=\boldsymbol{a}\boldsymbol{b}^H$. The components of $\boldsymbol{a}$ and $\boldsymbol{b}$ are written as
\begin{equation*}
\begin{aligned}
a_m&=exp\left(j2\pi \frac{d_t}{\lambda}(m-1)\sin\phi_t\sin\theta_t\right),\\
b_n&=exp\left(j2\pi \frac{d_{I}}{\lambda}(n-1)\sin\phi_{I}\sin\theta_{I}\right).
\end{aligned}
\end{equation*}
respectively, where $d_t, d_I$ are both inter-antenna separation at the AP or IRS, $\phi_t, \theta_t$ represent the LoS azimuth and elevation AoDs at the AP, and $\phi_I, \theta_I$ represent the LoS azimuth and elevation AoDs at the IRS. Here, we set $\frac{d_t}{\lambda}=\frac{d_I}{\lambda}=0.5$, and $\theta_t=\tan^{-1}\left(\frac{y_I-y_t}{40-25}\right), \theta_I=\pi-\theta_t$.

As opposed to the infinite Rician factor in the rank-one channel, the full-rank channel contains NLOS components. When $\boldsymbol{G}$ has LoS link and NLOS link simultaneously, the AP-IRS channel $\boldsymbol{G}$ is given by $\boldsymbol{G}=\sqrt{\frac{K}{1+K}}\boldsymbol{a}\boldsymbol{b}^H+\sqrt{\frac{1}{1+K}}\boldsymbol G^{NLOS}$,
where $\boldsymbol G^{NLOS}$ represents Rayleigh fading components.

For the purpose of simulating the IRS-enhanced channel, we let the eavesdropper hold still while the legitimate receiver keeps moving from AP to IRS. The simulation setup is shown in Fig.~\ref{Simulation_setup}, and some parameters are given in Table I.
\begin{figure}[htbp]
	\includegraphics[scale=0.4]{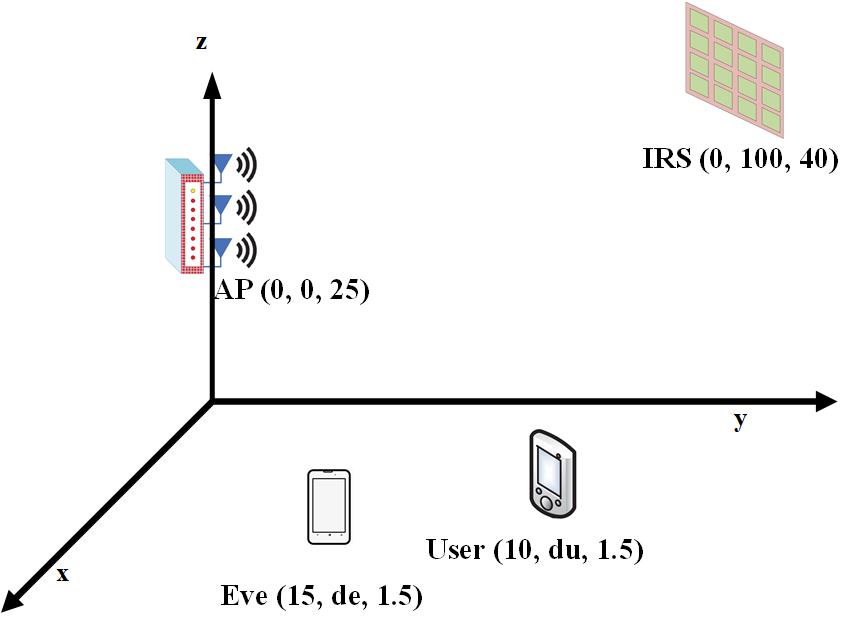}
	\caption{Simulation setup}
	\label{Simulation_setup}
\end{figure}
  
\begin{table}
\centering
\caption{Some simulation parameters}
\begin{tabular}{ |c|c| }
	\hline  
	Parameters & Values \\
	\cline{1-2}
	Number of antennas in AP, $N_t$ & 8\\
	\cline{1-2}
	Location of AP & (0, 0, 25)m \\
	\cline{1-2}
	Number of passive elements in IRS, $N_{IRS}$ & 8 \\
	\cline{1-2}
	Location of IRS & (0, 100, 40)m \\
	\cline{1-2}
	Noise variance, $\sigma^2$ & $10^{-11}$W \\
	\cline{1-2}
	Stopping criterion, $\epsilon$ & $10^{-4}$ \\
	\cline{1-2}
	Rician factor, $K$ & 2\\
	\cline{1-2}
	Elevation AoD at the AP & $\pi/2$\\
	\cline{1-2}
	Elevation AoD at the IRS & $3\pi/2$\\
	
	\hline
\end{tabular}
\end{table}

\subsection{SDP Algorithm VS PGD Algorithm}
The two algorithms are investigated in Fig. \ref{rate2power}, where the expected ergodic secrecy rate ranges from 8 $ bits/s/Hz$ to 15 $bits/s/Hz$ and the legitimate user stays 80 to 100 meters away from AP. We can see that the SDP algorithm and the PGD algorithm yield the same performance. However, the PGD algorithm has a very lower convergence speed, which indicates that PGD is not as good as SDP.
\begin{figure}[htbp]
	\centerline{\includegraphics[scale=0.4]{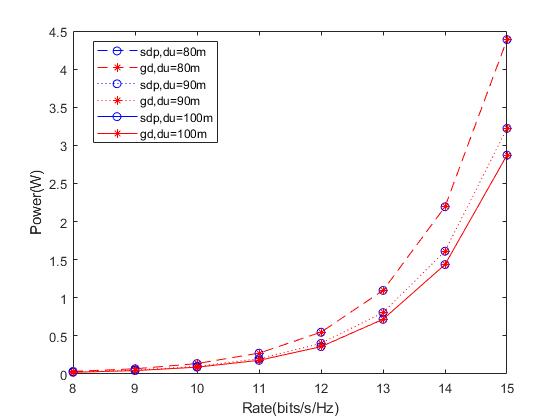}}
	\caption{Performance comparison for two algorithms}
	\label{rate2power}
\end{figure}

\subsection{Power vs Distance between AP and Legitimate User}
The second result is shown in Fig.~\ref{distance2power}, where the expected ergodic secrecy rate ranges from 12 $bits/s/Hz$ to 15 $bits/s/Hz$ and the legitimate user stays 50 to 150 meters away from AP. Compared with the first result, Fig.~\ref{distance2power} considers the distance as an independent variable in the objective function. First, it can be observed that the user far away from the IRS suffers more SNR loss due to signal attenuation and needs more power to satisfy the secrecy rate. Second, the curves are completely symmetric about the location of IRS due to the isotropic channel setting.
\begin{figure}[htbp]
	\centerline{\includegraphics[scale=0.4]{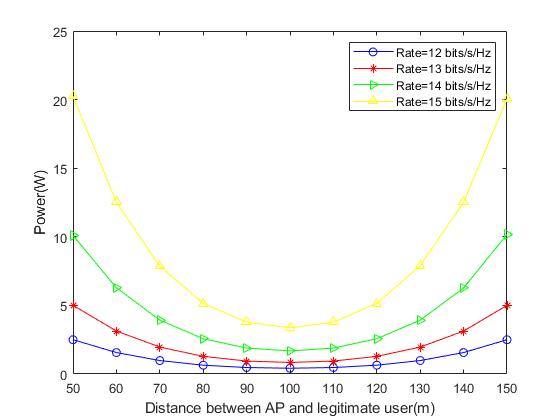}}
	\caption{Power consumption in different location}
	\label{distance2power}
\end{figure}

\section{Conclusion}
In this paper, an IRS-enhanced Gaussian MISO wiretap channel model has been studied, aiming to minimize the power while improving secrecy rate. We investigate two scenarios, including rank-one and full-rank channels. On one hand, in the rank-one channel, we separate beamforming vector and phase shift to facilitate a low-complexity design. On the other hand, in full-rank channel, we refer to conventional wiretap model to adopt eigenvalue-based algorithm and SDP/PGD algorithm to optimize it. Numerical results indicate that the proposed algorithm achieves an obvious improvement for the secrecy performance.

\end{document}